\def\diag{
{\rm diag}}
\newcommand{\e}{\mathrm{e}}
\renewcommand{\i}{\mathrm{i}}
\renewcommand{\d}{\mathrm{d}}
\theoremstyle{plain}
\newtheorem{theorem}{Theorem}
\newtheorem{lemma}[theorem]{Lemma}
\newenvironment{proof}[1][{}]{\par\noindent\textbf{Proof{#1}: }}{\hspace*{\fill}$\blacksquare$\smallskip\noindent\par}
\def\openone{\leavevmode\hbox{\small1\kern-3.3pt\normalsize1}}
\def\lc{\mbox{l.c.\,}}
\def\diag{\mbox{diag\,}}
\def\bbbz{\mathbb{Z}}
\def\bbbd{\mathbb{D}}
\begin{document}

\thispagestyle{plain}

\title{On MKdV Equations Related to the Affine Kac-Moody Algebra $A_{5}^{(2)}$}
\author{V. S. Gerdjikov, D. M. Mladenov, A. A. Stefanov, S. K. Varbev}

\date{}

\maketitle

\author{
Vladimir S. Gerdjikov$^{1}$,
Dimitar M. Mladenov$^{2}$,
Aleksander A. Stefanov$^{2,3,4}$,
Stanislav K. Varbev$^{2,5}$ \\[25pt]
$^1$ Institute of Nuclear Research and Nuclear Energy,\\
Bulgarian Academy of Sciences,\\
72 Tsarigradsko chausee, 1784 Sofia, Bulgaria\\[20pt]
$^2$ Theoretical Physics Department, Faculty of Physics, \\
Sofia University "St. Kliment Ohridski", \\
5 James Bourchier Blvd, 1164 Sofia, Bulgaria\\[20pt]
$^3$ Institute of Mathematics and Informatics,\\
Bulgarian Academy of Sciences,\\
Acad. Georgi Bonchev Str., Block 8, 1113 Sofia, Bulgaria\\[20pt]
$^4$ Faculty of Mathematics,\\
Sofia University "St. Kliment Ohridski",\\
5 James Bourchier Blvd, 1164 Sofia, Bulgaria.\\[20pt]
$^5$ Institute of Solid State Physics\\
Bulgarian Academy of Sciences, \\
72 Tzarigradsko chaussee, 1784 Sofia, Bulgaria.\\
}

\begin{abstract}
We have derived a new system of mKdV-type equations which can be related to the affine Lie algebra $A^{(2)}_{5}$.
This system of partial differential equations is integrable via the inverse scattering method.
It admits a Hamiltonian formulation and the corresponding Hamiltonian is also given.
The Riemann-Hilbert problem for the Lax operator is formulated and its spectral properties are discussed.\\[0.2cm]
\end{abstract}

\label{first}

\section{Introduction}

The general theory of the nonlinear evolution equations (NLEE) allowing Lax representation is well developed
\cite{AKNS,CalDeg,FaTa,GeKu,Ge4,NMPZ}.
In this paper our aim is to derive a set of modified Korteveg--de Vries (mKdV) equations related to three affine Lie algebras using the procedure introduced by Mikhailov \cite{Mikh}.
This means that the equations can be written as the commutativity condition of two ordinary differential operators of the type
\begin{equation}
\begin{split}
\label{1}
L\psi &\equiv \i\frac{\partial \psi}{ \partial x } + U(x,t,\lambda)\psi = 0 \\
M\psi &\equiv \i\frac{\partial \psi}{ \partial t } + V(x,t,\lambda)\psi = \psi \Gamma(\lambda)
\end{split}
\end{equation}
where $ U(x,t,\lambda)$, $V(x,t,\lambda)$ and $\Gamma(\lambda)$ are some polynomials of $\lambda$ to be defined below.
We request also that the Lax pair (\ref{1}) possesses appropriate reduction group \cite{Mikh}, for example
if the reduction group is $\bbbz_{h}$ ($h$ is a positive number) the reduction condition is
\begin{equation}
\label{2}
\begin{aligned}
C\big(U(x,t,\lambda)\big) = U(x,t,\omega\lambda),\qquad C\big(V(x,t,\lambda)\big) = V(x,t,\omega\lambda).
\end{aligned}
\end{equation}
This work can be considered as a continuation of our recent publications \cite{GMSV1,GMSV2,GMSV3}.
Below we consider the three cases separetely.
The underlying Kac-Moody algebras are $B^{(1)}_{2}$, $A^{(2)}_{4}$, $A^{(2)}_{5}$ and the groups of reductions are correspondingly
 $\bbbz_{4}$, $\bbbz_{5} \times \bbbz_{2}$, $\bbbz_{5} \times \bbbz_{2}$.
For the first two cases the Hamiltonians are well known \cite{DrinSok}.
A key motivation for choosing this particular algebras is that the derived equations will have very simple and elegant form.

Section 2 contains a derivation of the mKdV equations related to $B^{(1)}_{2}$. We start with the Lax representation which
is a subject to $\bbbz_{4}$-reduction group \cite{Mikh}, find the equations and derive the corresponding Hamiltonians.
Then using the Lax representation which is a subject to $\bbbz_{5} \times \bbbz_{2}$-reduction group
\cite{Mikh} we derive the system of mKdV equations related to $A^{(2)}_{4}$ and finally derive the corresponding Hamiltonians.
In the next Section 3 we make the same procedure but this time the algebra is $A^{(2)}_{5}$.
The Section 4 is devoted to the spectral properties of Lax operator for each algebra.
Finally we relate this to the famous Riemann-Hilbert problem (RHP).
We finish with  some discussion and  conclusion.

\section{Preliminaries}

\subsection{Equations Related to $B^{(1)}_{2}$}

We assume that the reader is familiar with the theory of semisimple Lie algebras \cite{Helgasson} and affine Lie algebras \cite{Carter}.
The rank of $B^{(1)}_{2}$ is $2$, its Coxeter number is $h=4$ and its exponents are $1, 3$.
Thus the  Coxeter automorphism (see  \cite{GMSV3}) introduces a grading in $B^{(1)}_{2}$ as follows
\begin{equation}
\label{4}
B^{(1)}_{2}=\mathop{\oplus}\limits_{k=0}^{3} \mathfrak{g}^{(k)}.
\end{equation}
The grading condition holds
\begin{equation}
\label{5}
\big[ \mathfrak{g}^{(k)}, \mathfrak{g}^{(l)} \big] \subset \mathfrak{g}^{(k+l)}
\end{equation}
where $k+l$ is taken modulo $4$.

A convenient basis compatible with the grading of $B^{(1)}_{2}$ algebra is \cite{GMSV3}
\begin{equation}\label{223} \begin{aligned}
\mathfrak{g}^{(0)} &: \lc  \{ \mathcal{E}_{11}^{+}, \mathcal{E}_{22}^{+}\}, &\quad \mathfrak{g}^{(1)} &: \lc \{  \mathcal{E}_{12}^{+}, \mathcal{E}_{23}^{+},  \mathcal{E}_{41}^{+} \} \\
\mathfrak{g}^{(2)} &: \lc \{ \mathcal{E}_{13}^{+}, \mathcal{E}_{31}^{+}\}, &\quad \mathfrak{g}^{(3)}&: \lc \{ \mathcal{E}_{21}^{+}, \mathcal{E}_{32}^{+},
\mathcal{E}_{14}^{+}\}
\end{aligned}
\end{equation}
where we use
\begin{equation}
\label{basis}
\mathcal{E}_{ij}^{\pm}=E_{i,j}\mp S_{1}E_{ij}^{T}S_{1}^{-1}=E_{i,j}\mp (-1)^{i+j}E_{6-j,6-i}.
\end{equation}
In this Section $E_{ij}$ is a $5\times 5$ matrix equal to $(E_{ij})_{n,p} =\delta_{in} \delta_{jp}$  and
\begin{equation}
\label{S}
S_{1}=E_{15}-E_{24}+E_{33}-E_{42}+E_{51},  \qquad S_1^2 =\openone
\end{equation}
provides the action of the external automorphism of $A_4 \simeq sl(5)$ related to the symmetry of its Dynkin diagram \cite{Helgasson}.
Obviously all $\mathfrak{E}_{ij}^+$ belong to the subalgebra $B_2\simeq so(5)$ of $A_4$.

For deriving the equations we start with a Lax pair of the form (for details see \cite{GMSV3})
\begin{equation}
\label{7}
\begin{aligned}
L &= \i\partial_x + Q(x,t) - \lambda J\\
M &= \i \partial_t + V^{(0)}(x,t) + \lambda V^{(1)}(x,t) + \lambda^2 V^{(2)}(x,t) - \lambda^{3}K
\end{aligned}
\end{equation}
with
\begin{equation}\label{10}\begin{aligned}
Q(x,t) &=\frac{\i}{2} \left(  u_{1}(x,t)\mathcal{E}_{11}^{+}- u_{2}(x,t)\mathcal{E}_{22}^{+} \right) &\;
J &=\mathcal{E}_{12}^{+}+\mathcal{E}_{23}^{+}+\mathcal{E}_{41}^{+} ,\\
V^{(0)}(x,t) &=v_{1}^{(0)}\mathcal{E}_{11}^{+}+v_{2}^{(0)}\mathcal{E}_{22}^{+}\\
V^{(1)}(x,t) &=v_{1}^{(1)}\mathcal{E}_{12}^{+}+v_{2}^{(1)}\mathcal{E}_{23}^{+}+v_{3}^{(1)}\mathcal{E}_{41}^{+}\\
V^{(2)}(x,t) &=v_{1}^{(2)}\mathcal{E}_{13}^{+}+v_{2}^{(2)}\mathcal{E}_{31}^{+} &\;
K&=2^{6}(\mathcal{E}_{21}^{+}+2\mathcal{E}_{32}^{+}+\mathcal{E}_{14}^{+}) .
\end{aligned}\end{equation}
We require that $[L,M]=0$ for any $\lambda$.
The condition $[L,M]=0$ leads to a set of recurrent relations (see \cite{AKNS,VG-13,VG-Ya-13}) which allow us to determine $V^{(k)}(x,t)$ in terms of the potential $Q(x,t)$ and its $x$-derivatives.

After the  transformation $x\mapsto 2x$ and $t\mapsto 2t$ the equations become
\begin{equation}
\begin{aligned}
&\frac{\partial u_{1}}{\partial t} = 4 \frac{\partial}{\partial x} \biggl( -u^{3}_{1}+3u_{2}^{2}u_{1}-4\frac{\partial^{2} u_{1}}{\partial x^{2}}+6u_{1}\frac{\partial u_{2}}{\partial x} \biggr) \\
&\frac{\partial u_{2}}{\partial t} = 4 \frac{\partial}{\partial x} \biggl( -u^{3}_{2}+3u_{1}^{2}u_{2}+2\frac{\partial^{2} u_{2}}{\partial x^{2}}-6u_{1}\frac{\partial u_{1}}{\partial x} \biggr).
\end{aligned}
\end{equation}
They can be written as  Hamiltonian equations of motion
\begin{equation}
\label{16}
\frac{\partial q_{i}}{\partial t}=\frac{\partial}{\partial x}\frac{\delta H}{\delta q_{i}}
\end{equation}
with the Hamiltonian
\begin{equation}\label{17a}
H=-\int_{-\infty}^{\infty} \d x \biggl( u_{1}^{4}+u_{2}^{4}-6u_{1}^{2}u_{2}^{2}-8 \left( \frac{\partial u_{1}}{\partial x} \right)^{2}
+4 \left( \frac{\partial u_{2}}{\partial x} \right)^{2} -12u_{1}^{2} \left( \frac{\partial u_{2}}{\partial x} \right) \biggr)
\end{equation}
which coincides with the one in  \cite{DrinSok}.

\subsection{Equations Related to $A^{(2)}_{4}$}

Similarly we treat the $A^{(2)}_{4}$ case. The rank of this algebra is $2$, its Coxeter number is $h=10$
and its exponents are $1, 3, 7, 9$. Now the Coxeter automorphism is  of order  $10$
and introduces a grading in  $A^{(2)}_{4}$ as follows
\begin{equation}
\label{20}
A^{(2)}_{4}=\mathop{\oplus}\limits_{k=0}^{9} \mathfrak{g}^{(k)}.
\end{equation}
The grading condition holds
\begin{equation}
\label{21}
\big[ \mathfrak{g}^{(k)}, \mathfrak{g}^{(l)} \big] \subset \mathfrak{g}^{(k+l)}
\end{equation}
where now $k+l$ is taken mod$10$, \cite{DrinSok,Carter}, see also \cite{GMSV3}.

A convenient basis compatible with the grading of $B^{(1)}_{2}$ algebra is \cite{GMSV3}
\begin{equation}\label{eq:223'}\begin{aligned}
\mathfrak{g}^{(0)} &: \lc \{ \mathcal{E}_{11}^{+}, \mathcal{E}_{22}^{+} \}&\qquad
\mathfrak{g}^{(1)}&:  \lc \{ \mathcal{E}_{14}^{-}, \mathcal{E}_{31}^{-}, \mathcal{E}_{42}^{-}\} \\
\mathfrak{g}^{(2)}&: \lc \{ \mathcal{E}_{12}^{+}, \mathcal{E}_{23}^{+}\}&\qquad
\mathfrak{g}^{(3)}&: \lc \{ \mathcal{E}_{21}^{-}, \mathcal{E}_{32}^{-}, \mathcal{E}_{15}^{-}\}&\qquad \\
\mathfrak{g}^{(4)}&: \lc \{ \mathcal{E}_{13}^{+}, \mathcal{E}_{41}^{+}\} &\qquad
\mathfrak{g}^{(5)}&: \lc \{ \mathcal{E}_{11}^{-}-\mathcal{E}_{22}^{-}, \mathcal{E}_{22}^{-}-\mathcal{E}_{33}^{-}\}\\
\mathfrak{g}^{(6)}&: \lc \{ \mathcal{E}_{14}^{+}, \mathcal{E}_{31}^{+}\}&\qquad
\mathfrak{g}^{(7)}&: \lc \{ \mathcal{E}_{12}^{-}, \mathcal{E}_{23}^{-}, \mathcal{E}_{51}^{-}\}\\
\mathfrak{g}^{(8)}&: \lc \{ \mathcal{E}_{21}^{+},  \mathcal{E}_{32}^{+}\}&\qquad
\mathfrak{g}^{(9)}&: \lc \{ \mathcal{E}_{13}^{-}, \mathcal{E}_{41}^{-}, \mathcal{E}_{24}^{-}\}
\end{aligned}\end{equation}
where we have used the basis (\ref{basis}) generated by the same matrix $S_1$ (\ref{S}).

The relevant Lax pair is of the form (for details see \cite{GMSV3})
\begin{equation}
\label{24}
\begin{aligned}
L &= \i\partial_x + Q(x,t) - \lambda J\\
M &= \i \partial_t + V^{(0)}(x,t) + \lambda V^{(1)}(x,t) + \lambda^2 V^{(2)}(x,t) - \lambda^{3}K
\end{aligned}
\end{equation}
with
\begin{equation}\label{27}\begin{aligned}
Q(x,t) &= \i u_{2}(x,t)\mathcal{E}_{11}^{+} - \i u_{1}(x,t)\mathcal{E}_{22}^{+} &\;
J&=\mathcal{E}_{14}^{-}+\mathcal{E}_{31}^{-}+\mathcal{E}_{42}^{-}, \\
V^{(0)}(x,t) &= v_{1}^{(0)}\mathcal{E}_{11}^{+}+(v_{2}^{(0)}-v_{1}^{(0)})\mathcal{E}_{22}^{+} \\
V^{(1)}(x,t) &= v_{1}^{(1)}\mathcal{E}_{14}^{-}+v_{2}^{(1)}\mathcal{E}_{31}^{-}+v_{3}^{(1)}\mathcal{E}_{42}^{-}\\
V^{(2)}(x,t) &=v_{1}^{(2)}\mathcal{E}_{12}^{+}+v_{2}^{(2)}\mathcal{E}_{23}^{+} &\;
K &= 20(\mathcal{E}_{21}^{-}-2\mathcal{E}_{32}^{-}+\mathcal{E}_{15}^{-}).
\end{aligned}\end{equation}
We continue analogously.
The condition $[L,M]=0$ leads to a set of recurrent relations (see \cite{AKNS,VG-13,VG-Ya-13})
which allow us to determine $V^{(k)}(x,t)$ in terms of the potential $Q(x,t)$ and its $x$-derivatives.

After the transformation $x \mapsto 2x$ the equations are
\begin{equation}
\begin{aligned}
\frac{\partial u_{1}}{\partial t} &= -2\frac{\partial}{\partial x} \biggl ( 3\frac{\partial^{2} u_{2}}{\partial x^{2}}
+\frac{\partial^{2} u_{1}}{\partial x^{2}}+ \left( 3u_{2}+6u_{1} \right)\frac{\partial u_{2}}{\partial x}+3u_{2}^{2}u_{1}-2u_{1}^{3} \biggr) \\
\frac{\partial u_{2}}{\partial t} &= -2\frac{\partial}{\partial x} \biggl ( 4\frac{\partial^{2} u_{2}}{\partial x^{2}}
+3\frac{\partial^{2} u_{1}}{\partial x^{2}}- \left( 6u_{1}+3u_{2} \right)\frac{\partial u_{1}}{\partial x}+3u_{1}^{2}u_{2}-2u^{3}_{2} \biggr) .
\end{aligned}
\end{equation}
The Hamiltonian formulation follows from (\ref{16}) and we find for $H$ \cite{DrinSok}
\begin{equation}
\begin{aligned}
H&= \int_{-\infty}^{\infty} dx \biggl( u_{1}^{4}+u_{2}^{4}-3u_{1}^{2}u_{2}^{2}+4 \left( \frac{\partial u_{2}}{\partial x} \right)^{2}+ \left( \frac{\partial u_{1}}{\partial x} \right)^{2}\\
&+3u_{2}^{2} \left( \frac{\partial u_{1}}{\partial x} \right) -6u_{1}^{2} \left( \frac{\partial u_{2}}{\partial x} \right)
+6 \left( \frac{\partial u_{1}}{\partial x} \right) \left( \frac{\partial u_{2}}{\partial x} \right) \biggr) .
\end{aligned}
\end{equation}

\section{Derivation of the Equations Related to $A^{(2)}_{5}$}

Now we consider the twisted affine Kac-Moody algebra $A^{(2)}_{5}$ case.
Its rank is $3$, the Coxeter number is $h=10$
and its exponents are $1, 3, 5, 7, 9$, see \cite{DrinSok,Carter}.
Then the Coxeter automorphism is given by
\begin{equation}\label{eq:C2}
C(X) = C_2 V(X)C_2^{-1}
\end{equation}
where $V$ is the external automorphism of the algebra $A_{5}\simeq sl(6)$ generated by the symmetry of its Dynkin
diagram and $C_2$ is an element of the Cartan subgroup defined below.
More precisely
\begin{equation}\label{eq:V2}
\begin{aligned}
V(X) = - S_2 X^T S_2^{-1}, \quad
S_2 = E_{1,6} -E_{2,5} +E_{3,4} -E_{4,3} + E_{5,2} -E_{6,1}.
\end{aligned}
\end{equation}
Note that in this Section the matrices $E_{kj}$ are $6\times 6$ matrices equal to $(E_{k,j})_{np}=\delta_{kn} \delta_{jp}$;
besides $S_2^2=-\openone$.

In analogy with the previous Section we introduce:
\begin{equation}
\mathcal{E}_{ij}^{\pm}=E_{i,j}\mp S_{2}E_{ij}^{T}S_{2}^{-1}=E_{i,j}\mp (-1)^{i+j}E_{7-j,7-i}
\end{equation}
which obviously satisfy:
\begin{equation}\label{eq:}
\begin{split}
V(\mathcal{E}_{ij}^{+}) = \mathcal{E}_{ij}^{+}, \qquad V(\mathcal{E}_{ij}^{-}) = -\mathcal{E}_{ij}^{-}.
\end{split}
\end{equation}
It is easy to check that $\mathcal{E}_{ij}^{+}$ provide a basis for the subalgebra $sp(6)$ of $A_{5}^{(2)}$.
The Cartan subgroup element $C_2$ is defined by
\begin{equation}\label{eq:C2'}
\begin{split}
 C_{2} = \diag (1, \omega, \omega^{2} ,\omega^{3}, \omega^{4}, 1).
\end{split}
\end{equation}

The basis is as follows
\begin{equation}\label{eq:223''}\begin{aligned}
\mathfrak{g}^{(0)} &: \lc \{ \mathcal{E}_{11}^{+}, \mathcal{E}_{22}^{+}, \mathcal{E}_{33}^{+}\}&\quad
\mathfrak{g}^{(1)} &: \lc \{ \mathcal{E}_{21}^{+}, \mathcal{E}_{32}^{+}, \mathcal{E}_{43}^{+}, \mathcal{E}_{15}^{-}\} \\
\mathfrak{g}^{(2)} &: \lc \{  \mathcal{E}_{31}^{+}, \mathcal{E}_{42}^{+}, \mathcal{E}_{14}^{-}\}&\quad
\mathfrak{g}^{(3)} &: \lc \{\mathcal{E}_{14}^{+}, \mathcal{E}_{25}^{+}, \mathcal{E}_{13}^{-}, \mathcal{E}_{24}^{-}\} \\
\mathfrak{g}^{(4)} &: \lc \{ \mathcal{E}_{51}^{+}, \mathcal{E}_{12}^{-}, \mathcal{E}_{23}^{-}\}&\quad
\mathfrak{g}^{(5)} &: \lc \{ \mathcal{E}_{16}^{+}, \mathcal{E}_{61}^{+}, \mathcal{E}_{33}^{-}-\mathcal{E}_{11}^{-}, \mathcal{E}_{33}^{-}-\mathcal{E}_{22}^{-}\}\\
\mathfrak{g}^{(6)} &: \lc \{ \mathcal{E}_{15}^{+}, \mathcal{E}_{21}^{-}, \mathcal{E}_{32}^{-}\}&\quad
\mathfrak{g}^{(7)} &: \lc \{ \mathcal{E}_{41}^{+}, \mathcal{E}_{52}^{+}, \mathcal{E}_{31}^{-}, \mathcal{E}_{42}^{-}\}\\
\mathfrak{g}^{(8)} &: \lc \{ \mathcal{E}_{13}^{+}, \mathcal{E}_{24}^{+}, \mathcal{E}_{41}^{-}\}&\quad
\mathfrak{g}^{(9)} &: \lc \{ \mathcal{E}_{12}^{+}, \mathcal{E}_{23}^{+}, \mathcal{E}_{34}^{+}, \mathcal{E}_{51}^{-}\}.
\end{aligned}\end{equation}

The grading condition is like always
\begin{equation}
\big[ \mathfrak{g}^{(k)}, \mathfrak{g}^{(l)} \big] \subset \mathfrak{g}^{(k+l)}
\end{equation}
where $k+l$ is taken modulo $10$.

We take a Lax pair of the form
\begin{equation}
\label{24'}
\begin{aligned}
L &= \i \partial_x + Q(x,t) - \lambda J \\
M &= \i \partial_t + V^{(0)}(x,t) + \lambda V^{(1)}(x,t) + \lambda^2 V^{(2)}(x,t) - \lambda^{3}K
\end{aligned}
\end{equation}
where
\begin{equation}
\label{25}
Q(x,t) \in \mathfrak{g}^{(0)}, \quad V^{(k)}(x,t) \in \mathfrak{g}^{(k)}, \quad K \in \mathfrak{g}^{(3)}, \quad J\in \mathfrak{g}^{(1)}.
\end{equation}
This means
\begin{equation}\label{eq:46}\begin{aligned}
Q(x,t) &= \i\sum_{j=1}^{3}q_{j}(x,t)\mathcal{E}_{jj}^{+}, \qquad  \qquad
J =\mathcal{E}_{21}^{+}+\mathcal{E}_{32}^{+}+\frac{1}{2} \mathcal{E}_{43}^{+} +\frac{1}{2}\mathcal{E}_{15}^{-} \\
V^{(0)}(x,t) &=\sum_{j=1}^{3}v_{j}^{(0)}\mathcal{E}_{jj}^{+}\\
V^{(1)}(x,t) &=v_{1}^{(1)}\mathcal{E}_{21}^{+}+v_{2}^{(1)}\mathcal{E}_{32}^{+} +\frac{1}{2} v_{3}^{(1)}\mathcal{E}_{43}^{+}+\frac{1}{2}v_{4}^{(1)}\mathcal{E}_{15}^{-} \\
V^{(2)}(x,t) &=-v_{1}^{(2)}\mathcal{E}_{31}^{+}-v_{1}^{(2)}\mathcal{E}_{42}^{+}-\frac{1}{2}v_{3}^{(2)}\mathcal{E}_{14}^{-}  \qquad \qquad K=b J^{3}.
 \end{aligned}\end{equation}

The condition $[L,M]=0$ leads to a set of recurrent relations (see \cite{AKNS,VG-13,VG-Ya-13}) which allow us to determine
$V^{(k)}(x,t)$ in terms of the potential $Q(x,t)$ and its $x$-derivatives.
For $V^{(2)}(x,t)$ we find, skipping the details, the result
\begin{equation}\label{29} \begin{aligned}
v_{1}^{(2)} = -\i b(q_{1}+q_{2}+q_{3})  \qquad  v_{2}^{(2)} &= -\i bq_{2} \qquad
v_{3}^{(2)} = -\i b(q_{1}-q_{2}-q_{3}).
\end{aligned}
\end{equation}
For $V^{(1)}(x,t)$ we find
\begin{equation}
\begin{aligned}
\label{28}
v_{1}^{(1)} &=2b \left( q_{1}q_{2}-\frac {\partial q_{1}}{\partial x} \right)+f \\
v_{2}^{(1)} &=b \left ( q_{3}^{2}-q_{1}^{2}+q_{1}q_{2}+q_{2}q_{3}+\frac {\partial }{\partial x} \left( q_{3}+q_{2}-q_{1} \right) \right)+f \\
v_{3}^{(1)} &=b \left ( q_{3}^{2}-q_{2}^{2}-q_{1}^{2}+q_{1}q_{2}+\frac {\partial }{\partial x} \left( q_{3}+2q_{2}-q_{1} \right) \right)+f \\
v_{4}^{(1)}&=f
\end{aligned}
\end{equation}
where $f(x,t)$ is some arbitrary function. Using a well known technique from the theory of recursion operators
 \cite{AKNS,Ge4,VG-Ya-13} we find from the equations for $V^{(0)}(x,t)$ also $f(x,t)$
\begin{equation}\label{30}
f = \frac{b}{5} \left( 2q_{2}^{2}+2q_{1}^{2}-3q_{3}^{2}-5q_{1}q_{2} +\frac {\partial }{\partial x} (5q_{1}-4q_{2}-3q_{3}) \right)
\end{equation}
and
\begin{equation}\label{31}\begin{aligned}
v_{1}^{(0)} &= \frac{\i b}{5} \biggl( -5\frac {\partial^{2}q_{1}}{\partial {x}^{2}}+3q_{1}\frac{\partial}{\partial x}(3q_{2}
+q_{3})-2q_{1}^{3}+3q_{1}(q_{2}^{2}+q_{3}^{2}) \biggr) \\
v_{2}^{(0)} &= \frac{\i b}{5} \biggl(  \frac {\partial^{2}}{\partial {x}^{2}}(4q_{2}+3q_{3})+3q_{2}\frac{\partial q_{3}}{\partial x}
-9q_{1}\frac{\partial q_{1}}{\partial x}+6q_{3}\frac{\partial q_{3}}{\partial x}-2q_{2}^{3}+3q_{2}(q_{1}^{2}+q_{3}^{2}) \biggr)\\
v_{3}^{(0)} &= \frac{\i b}{5} \biggl( \frac {\partial^{2}}{\partial {x}^{2}}(q_{3}+3q_{2})-6q_{3}\frac{\partial q_{2}}{\partial x}
-3q_{1}\frac{\partial q_{1}}{\partial x}-3q_{2}\frac{\partial q_{2}}{\partial x}-2q_{3}^{3}+3q_{3}(q_{1}^{2}+q_{2}^{2}) \biggr).
\end{aligned}
\end{equation}
And finally, the $\lambda$-independent terms in the Lax representation provide the equations
\begin{equation}
\begin{aligned}
\label{32}
&\alpha\frac{\partial q_{1}}{\partial t} = \frac{\partial}{\partial x} \biggl( -5\frac {\partial^{2}q_{1}}{\partial {x}^{2}}
+3q_{1}\frac{\partial}{\partial x}(3q_{2}+q_{3})-2q_{1}^{3}+3q_{1}(q_{2}^{2}+q_{3}^{2} \biggr)\\
&\alpha\frac{\partial q_{2}}{\partial t} =  \frac{\partial}{\partial x} \biggl( \frac {\partial^{2}}{\partial {x}^{2}}(4q_{2}
+3q_{3})+3q_{2}\frac{\partial q_{3}}{\partial x}-9q_{1}\frac{\partial q_{1}}{\partial x}+6q_{3}\frac{\partial q_{3}}{\partial x}-2q_{2}^{3}+3q_{2}(q_{1}^{2}+q_{3}^{2}) \biggr)\\
&\alpha\frac{\partial q_{3}}{\partial t} = \frac{\partial}{\partial x} \biggl( \frac {\partial^{2}}{\partial {x}^{2}}(q_{3}
+3q_{2})-6q_{3}\frac{\partial q_{2}}{\partial x}-3q_{1}\frac{\partial q_{1}}{\partial x}-3q_{2}\frac{\partial q_{2}}{\partial x}-2q_{3}^{3}+3q_{3}(q_{1}^{2}+q_{2}^{2}) \biggr)
\end{aligned}
\end{equation}
where $\alpha=\frac{5}{b}$.

We find for the corresponding Hamiltonian (\ref{16})
\begin{equation}
\begin{aligned}
H&=\frac{1}{\alpha}\int_{-\infty}^{\infty} \d x \biggl( -\frac{1}{2}\sum_{i=1}^{3}q_{i}^{4}+\frac{3}{2} \underset{i<j}{\sum_{i=1}^{3}
\sum_{j=1}^{3}}q_{i}^{2}q_{j}^{2}+\frac{5}{2} \left( \frac{\partial q_{1}}{\partial x} \right)^{2}
-2 \left( \frac{\partial q_{2}}{\partial x} \right)^{2}-\frac{1}{2} \left( \frac{\partial q_{3}}{\partial x} \right)^{2}\\
&+\frac{\partial q_{2}}{\partial x} \left( \frac{9}{2}q_{1}^{2}-3q_{3}^{2} \right) +\frac{3}{2}\frac{\partial q_{3}}{\partial x}(q_{1}^{2}
+q_{2}^{2})-3 \left( \frac{\partial q_{2}}{\partial x} \right) \left( \frac{\partial q_{3}}{\partial x} \right) \biggr) .
\end{aligned}
\end{equation}

\section{On the Spectral Properties of the Lax Operators}

\subsection{General Theory}

Here we will outline the general approach of constructing the fundamental analytic solutions (FAS) of the Lax
operators $L$ with deep reductions \cite{Beals-Coifman,GeYa*94,ContM,VG-Ya-13,VYa}. Next we will
detail these results for the three different Lax operators considered above.

Our first remark is about the  fact, that after a simple similarity transformation, which diagonalizes the relevant
matrix $J$, each of the above Lax operators will take the form
\begin{equation}\label{eq:Lt}
\begin{split}
\tilde{L} \equiv \i \frac{\partial \tilde{\chi}}{ \partial x } + (\tilde{Q}(x,t) - \lambda \tilde{J}) \tilde{\chi}(x,t,\lambda)= 0
\end{split}
\end{equation}
where $\tilde{J}$ is a diagonal matrix with complex eigenvalues.

The main ingredient needed for solving the direct and the inverse scattering problem of $\tilde{L}$ are the Jost solutions.

It is well known that the Lax operators of the form (\ref{eq:Lt}) with generic complex-valued $J$ allow Jost solutions only
for potentials on compact support \cite{Beals-Coifman}. An important theorem proved by Beals and Coifman \cite{Beals-Coifman}
states that any smooth potential $\tilde{Q}(x,t)$ can be approximated with an arbitrary precision by a potential on finite support.
Then one can introduce the Jost solutions by
\begin{equation}
\label{eq:Jo}
\lim_{x\to -\infty}\tilde{\phi}_-(x,t,\lambda) \e^{\i J\lambda x} =\openone, \qquad
\lim_{x\to \infty}\tilde{\phi}_+(x,t,\lambda) \e^{\i J\lambda x} =\openone.
\end{equation}
Then the scattering matrix is introduced by
\begin{equation}
\label{eq:T}\begin{split}
T(\lambda,t) = \widehat{\tilde{ \phi}}_+ ( x,t,\lambda) \tilde{ \phi}_- ( x,t,\lambda)
\end{split}
\end{equation}
where by "hat" we denote matrix inverse.

The next step of \cite{Beals-Coifman} was to prove that one can construct piece-wise FAS $\tilde{\chi}_\nu(x,t,\lambda)$
which allows analytic extension in a certain sector $\Omega_\nu$ in the complex $\lambda$-plane.
These results were generalized to any simple Lie algebra in \cite{VYa,ContM}.
The result is  that sector $\Omega_\nu$ has as boundaries  the rays starting from the origin
$l_{\nu-1}$ and $l_\nu$, see the Figures below.
The rays $l_\nu$ are determined by the solution of the linear equations
\begin{equation}\label{eq:lnu}
\begin{split}
 \Im \lambda \alpha(\tilde{J}) = 0.
\end{split}
\end{equation}

In what follows we will outline the construction of FAS for the operator $\mathfrak{L}$ which is defined by
\begin{equation}\label{eq:Ltt}
\begin{split}
\mathfrak{ L} \equiv \i \frac{\partial \xi}{ \partial x } + \tilde{Q}(x,t)\xi (x,t,\lambda) - \lambda [\tilde{J}, \xi(x,t,\lambda)] =0.
\end{split}
\end{equation}
Obviously the fundamental solutions of $\tilde{L}$ and $\mathfrak{L}$ are related by
\begin{equation}
\label{eq:xipm}
\xi_\pm (x,t,\lambda) =\tilde{ \phi}_\pm (x,t,\lambda) \e^{\i\lambda \tilde{ J}x}.
\end{equation}

The Jost solutions $\xi_\pm (x,t,\lambda)$ must satisfy Volterra type integral equations
\begin{equation}
\label{eq:xipm'}
\begin{split}
\xi_+(x,t, \lambda) &= \openone + \i \int_{\infty}^{x} \d y \; \e^{-\i\lambda \tilde{ J}(x-y)} Q(y,t) \xi_+ (y,t,\lambda) \e^{\i\lambda \tilde{ J}(x-y)}\\
\xi_-(x,t, \lambda) &= \openone + \i \int_{\infty}^{x} \d y \; \e^{-\i\lambda \tilde{ J}(x-y)} Q(y,t) \xi_- (y,t,\lambda)  \e^{\i\lambda \tilde{ J}(x-y)}. \\
\end{split}
\end{equation}

Let us now formulate the basic properties of $\xi_\nu (x,t, \lambda) $ --- the FAS of $\mathfrak{L}$.

\begin{enumerate}
\item
The continuous spectrum of $\mathfrak{L}$ fills up the rays $l_\nu$.

 \item
 Due to the  $\mathbb{Z}_h$ symmetry the rays $l_\nu$ close angles equal to $\pi/2h$ or $\pi/h$ depending
 on the choice of the algebra.

\item
To each ray $l_\nu$ we associate a subset of roots $\delta_\nu$ of $\mathfrak{g}$ which satisfy the condition (\ref{eq:lnu}).
Thus to each ray $l_\nu$ we associate a subalgebra $\mathfrak{g}_\nu \subset \mathfrak{g}$ generated by
$E_\alpha, E_{-\alpha}, H_\alpha$ for $\alpha\in \delta_\nu$.

\item
In each of the sectors  $\Omega_\nu $ we  can calculate the limits for $x\to\pm\infty$ along the lines $l_\nu$; more specifically
\begin{equation}
\label{eq:lim1}
\begin{split}
\lim_{x\to -\infty} \e^{-\i\lambda Jx} \xi_\nu (x,t,\lambda ) \e^{\i\lambda Jx} &= S_\nu^+ (t,\lambda)\\
\lim_{x\to \infty} \e^{-\i\lambda Jx} \xi_\nu (x,t,\lambda ) \e^{\i\lambda Jx}& = T_\nu^-(t,\lambda) D_\nu^+(\lambda)
\end{split}
\quad \forall \lambda \in l_\nu \e^{+ \i 0}
\end{equation}
and
\begin{equation}
\label{eq:lim2}
\begin{split}
\lim_{x\to -\infty} \e^{-\i\lambda Jx} \xi_\nu (x,t,\lambda)  \e^{\i\lambda Jx} &= S_{\nu+1}^- (t,\lambda)\\
\lim_{x\to -\infty} \e^{-\i\lambda Jx} \xi_\nu (x,t,\lambda ) \e^{\i\lambda Jx} &= T_{\nu+1}^+(t,\lambda) D^-_{\nu+1}(\lambda)
\end{split}
\quad \forall \lambda \in l_{\nu+1} \e^{- \i 0}
\end{equation}
where $S_\nu^\pm$,  $T_\nu^\pm$ and  $D_\nu^\pm$ are given by
\begin{equation}\label{eq:CTDpm}
\begin{aligned}
S_\nu^\pm (\lambda,t)   &= \exp \left( \sum_{\alpha\in\delta_\nu} s^\pm_\alpha(\lambda,t) E_{\pm \alpha}\right) \quad
D_\nu^\pm (\lambda)    = \exp \left( \sum_{\alpha\in\delta_\nu} d^\pm_{\nu, \alpha} H_{\alpha}\right) \\
T_\nu^\pm (\lambda,t)  &= \exp \left( \sum_{\alpha\in\delta_\nu} t^\pm_{\nu, \alpha}(\lambda,t) E_{\pm \alpha}\right) .
\end{aligned}\end{equation}
Obviously they take values in the subgroup $\mathcal{G}_\nu$
whose Lie algebra $\mathfrak{g}_\nu$ has as positive roots the subset of roots related to $l_\nu$, see the Table \ref{tab:1}.

\item
The time-dependence of  $S_\nu^\pm$,  $T_\nu^\pm$ and  $D_\nu^\pm$ is determined by the $M$ operator as
\begin{equation}\label{eq:dSpmdt}
\begin{aligned}
&
\i \frac{\partial S_\nu^\pm}{ \partial t } - \lambda^3 [K, S_\nu^\pm (\lambda,t) ] = 0 \qquad \i \frac{\partial D_\nu^\pm}{ \partial t }  = 0 \\
&
\i \frac{\partial T_\nu^\pm}{ \partial t } - \lambda^3 [K, T_\nu^\pm (\lambda,t) ] = 0
\end{aligned}
\end{equation}
where $K$ determines the leading term of the $M$-operator.

\item
The asymptotics $S_0^\pm$,  $T_0^\pm$ and $D_0^\pm$ and $S_1^\pm$,  $T_1^\pm$ and $D_1^\pm$ can be considered as
  independent. All the others are obtained from  them by the $\mathbb{Z}_h$ symmetry
  \begin{equation}\label{eq:Snu2}
  \begin{aligned}
  S_{2\nu}^\pm (\lambda) &= C^\nu (S_0^\pm(\lambda \omega^\nu), &\quad   S_{2\nu+1}^\pm (\lambda) &= C^\nu (S_1^\pm(\lambda \omega^\nu) \\
  T_{2\nu}^\pm (\lambda) &= C^\nu (T_0^\pm(\lambda \omega^\nu), &\quad   T_{2\nu+1}^\pm (\lambda) &= C^\nu (T_1^\pm(\lambda \omega^\nu) \\
  D_{2\nu}^\pm (\lambda) &= C^\nu (D_0^\pm(\lambda \omega^\nu), &\quad   D_{2\nu+1}^\pm (\lambda) &= C^\nu (D_1^\pm(\lambda \omega^\nu).
  \end{aligned}
  \end{equation}
\end{enumerate}

As a consequence of the above properties we prove the following lemma, which generalizes the results of Zakharov and Shabat
\cite{ZaSha} for this type of algebras.

\begin{lemma}\label{lem:1}
\begin{enumerate}
\item
The FAS $\xi_{\nu}(x,t,\lambda)$ of $\mathfrak{L}$ are solutions of the RHP
\begin{equation}\label{eq:rhp1}\begin{split}
\xi_{\nu+1}(x,t,\lambda) &=\xi_{\nu}(x,t,\lambda)G_{\nu}(x,\lambda), \\
G_{\nu}(x,\lambda) &=\e^{-\i\lambda Jx}\hat{S}^{-}_{\nu+1}S^{+}_{\nu+1}\e^{\i\lambda Jx}
\end{split}\end{equation}
which allows canonical normalization
\begin{equation}\label{eq:xi-can}
\lim_{\lambda\to\infty}\xi_{\nu}(x,t,\lambda)=\openone.
\end{equation}
\item
The corresponding potential $\tilde{Q}(x,t)$ is reconstructed from $ \xi_\nu(x,t,\lambda)$ by
\begin{equation}
\label{eq:Qxt}
\begin{split}
\tilde{ Q}(x,t) = \lim_{\lambda\to\infty} \lambda \left( \tilde{ J} - \xi_\nu(x,t,\lambda) \tilde{ J} \hat{\xi}_\nu(x,t,\lambda) \right)
\end{split}
\end{equation}
where $\xi_\nu(x,t,\lambda) $ is the unique regular solution of the RHP (\ref{eq:rhp1}), \cite{NMPZ,ContM,GeYa*94}.
\end{enumerate}
\end{lemma}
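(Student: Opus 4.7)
The plan is to prove (1) by computing the "sewing function" on each common ray directly from the asymptotic formulas established earlier in the section, and to prove (2) by a large-$\lambda$ expansion of $\xi_\nu$ substituted into the equation $\mathfrak{L}\xi=0$.

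For (1), observe first that $\xi_\nu$ and $\xi_{\nu+1}$ are both solutions of $\mathfrak{L}\xi=0$ in the sense of (\ref{eq:Ltt}) that extend continuously to the shared ray $l_{\nu+1}$ from their respective sectors. On that ray write $\xi_{\nu+1}=\xi_\nu G_\nu$. Substituting this ansatz into the equation for $\xi_{\nu+1}$ and using the same equation for $\xi_\nu$ causes all $\tilde{Q}$-dependent terms to cancel, leaving the matrix ODE
\begin{equation}
i\partial_x G_\nu = \lambda\,[\tilde{J}, G_\nu],
\end{equation}
whose general solution is $G_\nu(x,\lambda)=e^{-i\lambda\tilde{J}x}M(\lambda)e^{i\lambda\tilde{J}x}$ with $M$ independent of $x$. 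To identify $M$, send $x\to-\infty$ along $l_{\nu+1}$ and invoke the asymptotic formulas (\ref{eq:lim1})--(\ref{eq:lim2}), which supply the limits of $\xi_\nu$ from the $\Omega_\nu$ side and of $\xi_{\nu+1}$ from the $\Omega_{\nu+1}$ side in terms of $S^{-}_{\nu+1}$ and $S^{+}_{\nu+1}$ respectively. Matching then yields $M=\hat{S}^{-}_{\nu+1}S^{+}_{\nu+1}$, which is precisely the jump in (\ref{eq:rhp1}).

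The canonical normalization $\xi_\nu\to\openone$ as $|\lambda|\to\infty$ inside $\Omega_\nu$ is extracted from the Volterra equation (\ref{eq:xipm'}) by iteration: for $\lambda$ deep in $\Omega_\nu$ the matrix entries of the conjugated kernel $e^{-i\lambda\tilde{J}(x-y)}(\cdot)e^{i\lambda\tilde{J}(x-y)}$ are either exponentially decaying or bounded, so the Neumann series converges with $\xi_\nu-\openone=O(\lambda^{-1})$ uniformly in $x$ for sufficiently regular $\tilde{Q}$. Uniqueness of the regular RHP solution follows from the Liouville-style argument standard in soliton theory (cf.\ \cite{NMPZ,ContM,GeYa*94}): if two candidates $\xi_\nu^{(1)},\xi_\nu^{(2)}$ existed, their sector-wise ratio would extend to an entire function on the full $\lambda$-plane with limit $\openone$ at infinity, hence equal to $\openone$.

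For (2), expand $\xi_\nu(x,t,\lambda)=\openone+\lambda^{-1}\xi^{(1)}(x,t)+O(\lambda^{-2})$ as $|\lambda|\to\infty$ using the canonical normalization. Substituting into (\ref{eq:Ltt}) and matching the coefficient of $\lambda^{0}$ gives $\tilde{Q}(x,t)=[\tilde{J},\xi^{(1)}(x,t)]$. A direct computation using $\hat{\xi}_\nu=\openone-\lambda^{-1}\xi^{(1)}+O(\lambda^{-2})$ then produces
\begin{equation}
\xi_\nu\tilde{J}\hat{\xi}_\nu=\tilde{J}-\lambda^{-1}[\tilde{J},\xi^{(1)}]+O(\lambda^{-2})=\tilde{J}-\lambda^{-1}\tilde{Q}+O(\lambda^{-2}),
\end{equation}
so that $\lambda\bigl(\tilde{J}-\xi_\nu\tilde{J}\hat{\xi}_\nu\bigr)\to\tilde{Q}$ as $|\lambda|\to\infty$, which is (\ref{eq:Qxt}). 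The main technical obstacle throughout is the rigorous justification of the $|\lambda|\to\infty$ expansion, in particular controlling the $O(\lambda^{-2})$ remainder uniformly in $x$; this requires smoothness and decay of $\tilde{Q}$ (e.g.\ Schwartz class) and is typically handled by integrating by parts once or twice in the Volterra kernel to extract explicit $\lambda^{-1}$ corrections.
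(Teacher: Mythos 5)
Your proposal is correct and follows essentially the same route as the paper: part (1) is obtained by comparing the two FAS on the common ray via the asymptotics (\ref{eq:lim1})--(\ref{eq:lim2}) and the fact that a solution of the commutator equation is fixed by its limit as $x\to-\infty$, and part (2) is the standard large-$\lambda$ expansion, which is just the explicit form of the paper's ``multiply (\ref{eq:Ltt}) by $\hat{\xi}_\nu$ on the right and let $\lambda\to\infty$'' argument. You supply more detail than the paper (the ODE for $G_\nu$, the Neumann-series justification of the canonical normalization, the Liouville-type uniqueness), all of it standard and consistent with the cited references.
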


\begin{proof}
1) follows easily from eqs. (\ref{eq:lim1}), (\ref{eq:lim2}) and from the fact, that the FAS is determined uniquely by
its asymptotic for $x\to\pm\infty$.

2) follows from the fact that $\xi_\nu(x,t,\lambda)$ is a fundamental solution of $\mathfrak{L}$. Multiply eq. (\ref{eq:Ltt}) by
$\hat{\xi}_\nu(x,t,\lambda)$ on the right, take the limit $\lambda\to\infty$ and use the canonical normalization (\ref{eq:xi-can}).

\end{proof}

We will formulate the specific properties  for the 3 algebras independently.

\subsection{$B^{(1)}_{2}$}

Here $h=4$ and $\tilde{J}= \sqrt{2}\, \diag(1, \i, 0, -\i, -1)$. The rays $l_\nu$ are defined by $l_\nu : \arg \lambda = \nu \pi/4$; thus they
close angles $\pi/4$. The sectors
$\Omega_\nu$, $\nu=0,\dots,7$ are shown on Figure \ref{fig:1}, left panel. The set of roots $\delta_\nu$ related to each $l_\nu$ are given in
Table \ref{tab:1}.

\begin{figure}
\includegraphics[width= 60mm]{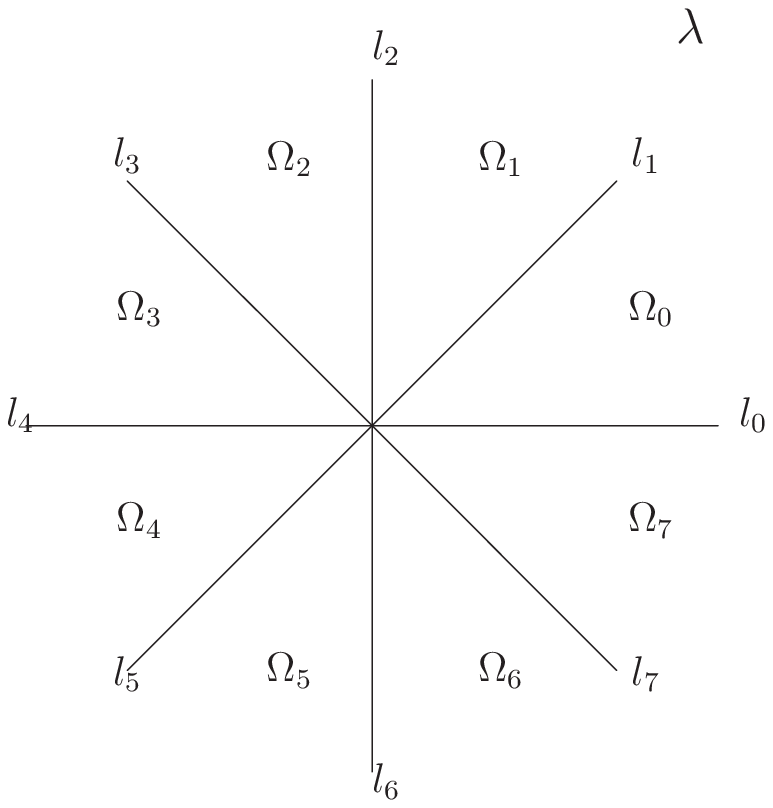}\qquad   \includegraphics[ width= 60mm]{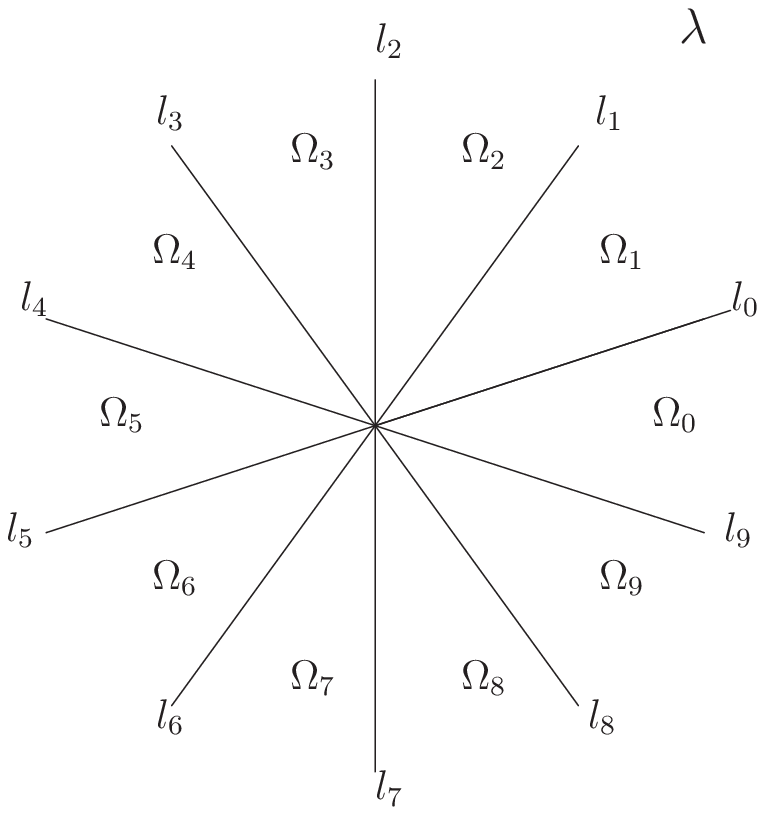}\\
\caption{
The continuous spectrum and the analyticity sectors of the FAS for the Lax operators: the case of $B^{(1)}_{2}$ ---   left panel;
the case of $A^{(2)}_{4}$ --- right panel; }
\label{fig:1}
\end{figure}

\begin{table}
  \centering
  \caption{Subsets of positive roots related to the lines $l_\nu\cup l_{\nu+4}$, $\nu=0,\dots, 3$ for the algebra $B_2^{(1)}$}\label{tab:1}
\begin{tabular}{|c|c|c|c|c|c|}
  \hline
 $ l_0\cup l_4 $ & $ l_1\cup l_5 $ & $ l_2\cup l_6 $ & $ l_3\cup l_7 $ \\
 $ e_1 $ & $ e_1-e_2 $ & $ e_2 $ & $ e_1+e_2 $ \\
  \hline
\end{tabular}
\end{table}

\subsection{$A^{(2)}_{4}$}

Similarly for $A^{(2)}_{4}$ we have  $h=10$ and $\tilde{J}= \diag(\omega,\omega^{3},-1,\omega^{-3},\omega^{-1})$ with $\omega = \exp (2\pi \i/10)$.
The rays $l_\nu$ are defined by $l_\nu : \arg \lambda = (2\nu +1)\pi/10$,  $\nu=0,\dots,9$; thus they
close angles $\pi/5$. The sectors
$\Omega_\nu$, $\nu=0,\dots,9$ are shown on Figure \ref{fig:1}, right panel.
The set of roots $\delta_\nu$ related to each $l_\nu$ are given in
Table \ref{tab:2}.

\begin{table}
  \centering
  \caption{Subsets of positive roots related to the lines $l_\nu\cup l_{\nu+5}$ $\nu=0,\dots, 4$ for the algebra $A_4^{(2)}$}\label{tab:2}
\begin{tabular}{|c|c|c|c|c|c|}
  \hline
 $ l_0\cup l_5 $ & $ l_1\cup l_6 $ & $ l_2\cup l_7 $ & $ l_3\cup l_8 $ & $ l_4\cup l_9 $ \\
 $ e_{1}-e_{2},e_{3}-e_{5} $ & $ e_{2}-e_{5},e_{3}-e_{4} $ & $ e_{1}-e_{5},e_{2}-e_{4} $ & $ e_{1}-e_{4},e_{2}-e_{3} $ & $ e_{1}-e_{3},e_{4}-e_{5} $ \\
  \hline
\end{tabular}
\end{table}

\subsection{$A^{(2)}_{5}$}

For $A^{(2)}_{5}$ we also have  $h=10$ but now $\tilde{J}= \diag(\omega,\omega^{3},-1,\omega^{-3},\omega^{-1})$ with $\omega = \exp (2\pi \i/10)$.
The rays $l_\nu$ now are defined by $l_\nu : \arg \lambda = \nu \pi/10$,  $\nu=0,\dots,19$; thus they
close angles $\pi/10$. The sectors
$\Omega_\nu$, $\nu=0,\dots, 19$ are shown on Figure \ref{fig:2}. The set of roots $\delta_\nu$ related to each $l_\nu$ are given in
Table \ref{tab:3}.

\begin{figure}
 \includegraphics[width= 70mm]{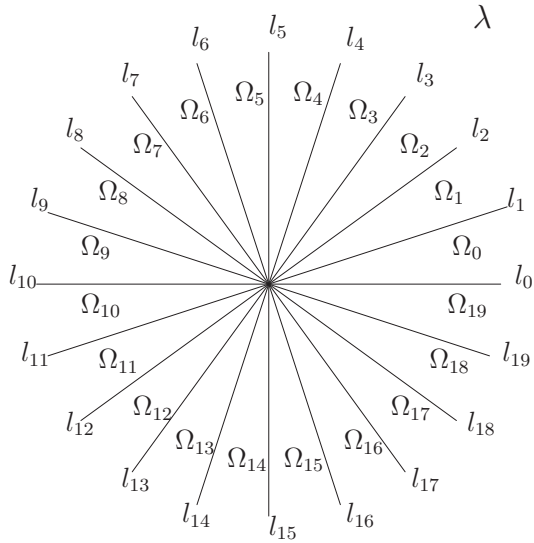}\\
 \caption{The continuous spectrum and the analyticity sectors of the FAS for the Lax operators for the case of $A^{(2)}_{5}$;  }\label{fig:2}
\end{figure}

\begin{table}
  \centering
  \caption{Subsets of positive roots related to the lines $l_\nu\cup l_{\nu+10}$, $\nu=0,\dots, 9$ for the algebra $A_5^{(2)}$}\label{tab:3}
\begin{tabular}{|c|c|c|c|c|c|}
  \hline
 $ l_{0}\cup l_{10} $ & $ l_{1}\cup l_{11} $ & $ l_{2}\cup l_{12} $ & $ l_{3}\cup l_{13} $ & $ l_{4}\cup l_{14} $ \\
 $ e_{1}-e_{6} $ & $ e_{1}-e_{3}, e_{4}-e_{5} $ & $ e_{3}-e_{6} $ & $ e_{1}-e_{2}, e_{3}-e_{5} $ & $ e_{5}-e_{6} $ \\
  \hline
  $ l_{5}\cup l_{15} $ & $ l_{6}\cup l_{16} $ & $ l_{7}\cup l_{17} $ & $ l_{8}\cup l_{18} $ & $ l_{9}\cup l_{19} $ \\
  $ e_{2}-e_{5},e_{3}-e_{4} $ & $ e_{2}-e_{6} $ & $ e_{1}-e_{5},e_{2}-e_{4} $ & $ e_{4}-e_{6} $ & $ e_{1}-e_{4},e_{2}-e_{3} $ \\
  \hline
\end{tabular}
\end{table}

We end this Section by the following lemma
\begin{lemma}\label{lem:2}
Each of the subalgebras $\mathfrak{g}_\nu$ related to the ray $l_\nu $ is a direct sum of $sl(2)$ subalgebras.
\end{lemma}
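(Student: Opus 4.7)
The plan is to translate the lemma into a purely root-theoretic statement and then verify it case by case from Tables \ref{tab:1}--\ref{tab:3}. By construction $\mathfrak{g}_\nu$ is generated by the $sl(2)$-triples $(E_\alpha, E_{-\alpha}, H_\alpha)$ as $\alpha$ runs over $\delta_\nu$, and each such triple is a copy of $sl(2)$ by standard semisimple theory. Hence $\mathfrak{g}_\nu$ is a direct sum of these $sl(2)$'s precisely when any two such triples commute, which by the standard relations $[H_\alpha, E_{\pm\beta}] = \pm(\alpha,\beta)E_{\pm\beta}$ and $[E_{\pm\alpha}, E_{\pm\beta}] = N_{\pm\alpha,\pm\beta}E_{\pm\alpha\pm\beta}$ amounts to requiring $(\alpha,\beta)=0$ and $\alpha\pm\beta\notin\Delta$ for all distinct $\alpha,\beta\in\delta_\nu$.

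First I would dispose of the $B_2^{(1)}$ case: every $\delta_\nu$ in Table \ref{tab:1} is a singleton, so $\mathfrak{g}_\nu\cong sl(2)$ with nothing further to check. For $A_4^{(2)}$ and $A_5^{(2)}$ I would then inspect Tables \ref{tab:2} and \ref{tab:3}: every non-singleton $\delta_\nu$ is a pair of $A_n$-type roots $\{e_i-e_j,\,e_k-e_l\}$ with completely disjoint index sets $\{i,j\}\cap\{k,l\}=\emptyset$ (e.g.\ $\{e_1-e_2,\,e_3-e_5\}$, $\{e_2-e_5,\,e_3-e_4\}$, $\{e_1-e_4,\,e_2-e_3\}$). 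Using $(e_i,e_j)=\delta_{ij}$ this immediately gives orthogonality, and also $(e_i-e_j)\pm(e_k-e_l)$ is a signed sum of four distinct basis vectors, hence never of the form $e_m-e_n$; so it cannot be a root of the ambient $A_n$-system. Therefore the two triples commute and $\mathfrak{g}_\nu\cong sl(2)\oplus sl(2)$.

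The only subtle point is that orthogonality alone would not be sufficient in general: in $B_n$ one has $e_1\perp e_2$ while $e_1+e_2\in\Delta$. So one really needs the stronger disjoint-index condition. Fortunately this pathology never occurs in our tables, because in $B_2^{(1)}$ each $\delta_\nu$ is a singleton, and in the other two cases the roots involved are all of the form $e_i-e_j$. A more uniform alternative would be to invoke the $\mathbb{Z}_h$ symmetry (property 6 in the list preceding Lemma \ref{lem:1}), which reduces the verification to $\delta_0$ and $\delta_1$ for each algebra, but the direct tabular inspection is the most transparent route.
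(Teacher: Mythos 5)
Your proof is correct and follows essentially the same route as the paper's: a root-theoretic verification, from the tables, that the $sl(2)$-triples attached to the roots in each $\delta_\nu$ pairwise commute. The paper economizes by checking only $\delta_0$ and $\delta_1$ for $A_5^{(2)}$ and propagating to the remaining rays via the Coxeter ($\mathbb{Z}_h$) symmetry, which preserves orthogonality --- the alternative you mention at the end --- whereas you inspect every entry of Tables \ref{tab:1}--\ref{tab:3} directly; both are legitimate. The one substantive point where you improve on the paper is your observation that orthogonality $(\alpha,\beta)=0$ alone does not force $[E_{\pm\alpha},E_{\pm\beta}]=0$: one must also check $\alpha\pm\beta\notin\Delta$, which can fail for orthogonal roots in a non-simply-laced system (e.g.\ $e_1\perp e_2$ with $e_1+e_2$ a root in $B_n$). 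The paper's proof asserts the direct-sum decomposition from mutual orthogonality without comment; this is harmless here because the roots listed in the tables sit in an ambient $A_n$ system, where equal root lengths force $|\alpha\pm\beta|^2=4$ whenever $(\alpha,\beta)=0$, so the sum and difference are never roots --- but your disjoint-index argument makes this explicit and closes the gap.
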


\begin{proof}
Let us prove our lemma for the algebra $A_{5}^{(2)}$. First we consider the subalgebras $\mathfrak{g}_0$ and  $\mathfrak{g}_1$
related to the rays $l_0$ and $l_1$. From Table V we find that the algebra  $\mathfrak{g}_0$ is generated by $E_\alpha$,
$E_{-\alpha}$ and $H_\alpha$, where $\alpha$ takes the values $e_1-e_4$, $e_2-e_3$ and $e_5-e_6$. These three roots are
mutually orthogonal, which means that  $\mathfrak{g}_0\equiv sl(2)\oplus sl(2)\oplus sl(2)$. Similarly, the algebra
$\mathfrak{g}_1$ is generated by $E_\beta$, $E_{-\beta}$ and $H_\beta$, where $\beta$ takes the values $e_1-e_3$ and
$e_4-e_6$, which are orthogonal to each other.  Therefore $\mathfrak{g}_1\equiv sl(2)\oplus sl(2)$.
Next we use the $\mathbb{Z}_{r+1}$ symmetry, which in particular means that the set of the roots $\delta_\nu$ related to
the ray $l_\nu$ by the Coxeter transformation $C$ as follows
\begin{equation}\label{eq:Cox}\begin{split}
\delta_{2\nu} = C^\nu (\delta_0), \qquad \delta_{2\nu+1} = C^\nu (\delta_1).
\end{split}\end{equation}
It remains to use the fact that the Coxeter transformation is an orthogonal transformation of the space of roots, so it
obviously preserves the angles between any two roots.

The other cases are proved analogously.
\end{proof}

\section{Discussion and Conclusion}

We have derived several systems of equations which are related to the affine Kac-Moody algebras $B^{(1)}_{2}$, $A^{(2)}_{4}$ and
$A^{(2)}_{5}$ respectively. They admit a Lax representation and can be solved using Inverse scattering method. We also outlined
the spectral properties of their Lax operators and formulated the corresponding RHP. This can be used to derive their
soliton solutions via the dressing Zakharov-Shabat method.  Lemma \ref{lem:2} can be used to prove the complete integrability of these mKdV equations. One can also develop the spectral theory of the relevant recursion operators following the ideas of
\cite{Konop,ContM,VG-Ya-13,GuKV} which can be used as a ground for uniform deriving of all fundamental properties of the NLEE.

\section*{Acknowledgments}
The work is partially supported by the ICTP -- SEENET-MTP project PRJ-09.
One of us (VSG) is grateful to professor A.S. Sorin for useful discussions during his visit to JINR, Dubna, Russia under project 01-3-1116-2014/2018.

\section*{Bibliography}

\end{document}